\newtheorem{theorem}{\textbf{Theorem}}
\newtheorem{corollary}{\textbf{Corollary}}
\newtheorem{remark}{\textbf{Remark}}
\newtheorem{remark*}{Remark}
\newcommand{\N}{\mathbb{N}}
\newcommand{\1}{\mathtt{1}}
\newcommand{\0}{\mathtt{0}}
\newcommand{\bool}{\{\0,\1\}}
\newcommand{\xor}{\oplus}
\newcommand{\ot}{\leftarrow}
\newcommand{\Poly}{{\mathsf{P}}}
\newcommand{\NP}{{\mathsf{NP}}}
\newcommand{\coNP}{{\mathsf{co\text{-}NP}}}
\newcommand{\NPNP}{{\mathsf{NP^{NP}}}}
\newcommand{\NPcoNP}{{\mathsf{NP^{co\text{-}NP}}}}
\newcommand{\BS}[1]{{\text{\textsc{BS}}_{#1}}}
\newcommand{\decisionpb}[3]{\fbox{\parbox{0.9\textwidth}{{\bf #1}\\{\it Input:} #2\\{\it Question:} #3}}}
\newcommand{\TODO}[1]{{\color{red}\fbox{TODO:} {\sf #1}}}
\title{Complexity of limit-cycle problems
in Boolean networks}
\author[1]{Florian Bridoux}
\author[1]{Caroline Gaze-Maillot}
\author[1,2]{K{\'e}vin Perrot}
\author[1]{Sylvain Sen{\'e}}
\affil[1]{Aix Marseille Univ., Univ. Toulon, CNRS, LIS, UMR 7020, Marseille, France.}
\affil[2]{Univ. C{\^o}te d'Azur, CNRS, I3S, UMR 7271, Sophia Antipolis, France.}
\date{}
\begin{document}
\setlist[itemize,enumerate]{nosep}
\maketitle

\begin{abstract}
  Boolean networks are a general model of interacting entities,
  with applications to biological phenomena such as gene regulation.
  Attractors play a central role, and the schedule of entities update
  is {\em a priori} unknown.
  This article presents results on the computational
  complexity of problems related to the existence of update schedules
  such that some limit-cycle lengths are possible or not.
  We first prove that given a Boolean network updated in parallel,
  knowing whether it
  has at least one limit-cycle of length $k$ is $\NP$-complete.
  Adding an
  existential quantification on the block-sequential update schedule
  does not change the complexity class of the problem,
  but the following
  alternation brings us one level above in the polynomial hierarchy: given a
  Boolean network, knowing whether there exists a block-sequential update
  schedule such that it has no limit-cycle of length $k$ is $\NPNP$-complete.
\end{abstract}

\section{Introduction}

Boolean networks (BNs) were introduced by McCulloch and Pitts in the 1940s 
through the well known formal neural networks~\cite{J-McCulloch1943} that 
are specific BNs governed by a multi-dimensional threshold function. 
Informally, BNs are finite dynamical systems in which entities having 
Boolean states may interact with each other over discrete time. After their 
introduction, neural networks were studied in depth from the mathematical 
standpoint. Among the main works on them are the introduction by Kleene of 
finite automata and regular expression~\cite{RR-Kleene1951}, first results  on 
the dynamical behaviors of linear feedback shift register~\cite{J-Huffman1959} 
and linear networks~\cite{J-Elspas1959}. These researches led Kauffman and 
Thomas (independently) from the end of the 1960s to develop the use of BNs in 
the context of biological networks 
modeling~\cite{J-Kauffman1969b,J-Thomas1973}, which has paved the way to 
numerous applied works at the interface between molecular biology, computer 
science and discrete mathematics. In parallel, theoretical developments were 
done in the framework of linear algebra and numerical analysis by 
Robert~\cite{J-Robert1969}, and in that of dynamical system theory and 
computational models, which constitutes the lens through which we look at BNs
in this paper.

In this context, numerous studies have already been led and have brought very 
important results. Considering that a BN can be defined as a collection of local 
Boolean functions (each of these defining the discrete evolution of one entity 
over time given the states of the entities that influence it), it can be 
represented by a directed graph at the static level, classically called the interaction 
digraph. Moreover, as a BN is by definition of finite size here, it is trivial to 
see that the trajectory of any of its configurations (or global state) ends into 
a cycle that can be a fixed point or a limit-cycle. The main theoretical 
objective in the domain is twofold: obtaining (combinatorial or algebraic) 
characterizations of the dynamics of such objects, through either their 
definition as collections of Boolean functions or their interaction graphs, 
and understanding the complexity of finding such characterizations. 

In these lines, Robert showed that retroaction cycles between entities in the 
interaction graph are necessary for a BN to have a non-trivial dynamical 
behavior~\cite{J-Robert1980} and Thomas conjectured strong relations between 
these retroaction cycles (well known as positive and negative cycles) and 
the existence of multi-stationarity (several fixed points) or 
limit-cycles~\cite{J-Thomas1981} which were proven 
later~\cite{J-Remy2008,J-Richard2007,J-Richard2010}. A notable fact about these 
seminal works is that they underline clearly that retroaction cycles are the 
engines of behavioral complexity (or dynamical richness). More recently, a real 
effort has been impulsed on the understanding of retroaction cycles. In 
particular, Demongeot et al. characterized exhaustively the behaviors of 
retroaction cycles and some of their intersections~\cite{J-Demongeot2012}. 
Furthermore, the problem of counting the number of fixed points and limit-cycles
has mushroomed. Advances have been done concerning fixed 
points~\cite{J-Aracena2004,J-Aracena2017}. Nevertheless, due to the high 
dependence of limit-cycle appearance according to the update schedule 
(\emph{i.e.} the way / order under which entities are updated over time), no 
general combinatorial results have been obtained, except for retroaction 
cycles~\cite{J-Demongeot2012}. In relation to complexity theory, the 
main known results based on BNs are: determining if a BN admits 
fixed points is NP-complete, counting fixed points is 
\#P-complete~\cite{J-Floreen1989,C-Orponen1992}, determining if a fixed point 
has a non-trivial attraction basin is NP-complete, determining if there exists 
another update schedule that conserves the limit-cycles of a given BN evolving 
in parallel is NP-hard~\cite{J-Aracena2013}. Moreover, a recent 
work~\cite{C-Bridoux2019} focused on related questions on fixed point complexity 
by focusing on interaction digraphs and not on BNs anymore (notice that several 
BNs admit the same interaction digraph).

In this paper, we impregnate from these last results and transfer the 
problematics to limit-cycles, which constitutes to our knowledge one of the 
first attempts to understanding limit-cycles from the complexity theory point of 
view with~\cite{J-Aracena2013,T-Gomez2015}. More precisely, considering that the 
input is a BN, we prove that determining if a BN evolving admits a limit-cycle 
of length $k$ is NP-complete whatever the update schedule (in the class of 
block-sequential updating modes, that is updating modes defined as ordered 
partitions of the set of entities). Furthermore, we show that determining if 
there exists a block-sequential update schedule such that a given BN admits no 
limit-cycles of length $k$ is NP$^{\text{NP}}$-complete.

In what follows, Section~\ref{sec:definitions} presents the main definitions 
that are used in the paper. Section~\ref{sec:sota} gives a brief state 
of the art of the problematic addressed. The main results of the paper are given 
in Section~\ref{sec:pb} and are followed by a conclusion developing some 
perspectives of this work.

\section{Definitions}
\label{sec:definitions}

We denote $\N_+$ the set of strictly positive integers, and $[n]=\{1,\dots,n\}$
for some $n \in \N_+$. For $x \in \bool^n$ and $i \in [n]$, we denote $x_i$ the
component $i$ of $x$, and $x+e_i$ the vector of $\bool^n$ obtained by flipping
component $i$ of $x$ (addition is performed modulo $2$). The symbol $\xor$ is
used for the binary operator {\em exclusive or} ({\em xor}).

\subsection{Boolean networks}

A {\em Boolean network} (BN) is a function $f:\bool^n \to \bool^n$,
that we see as $n$ {\em local functions} $f_1,\dots,f_n$ with $f_i:\bool^n \to
\bool$ for each $i \in [n]$. The {\em interaction digraph} of a BN $f$
captures the actual dependencies among its components, and is defined as
$G_f=(V,A)$, with $V=[n]$ and
$$(i,j) \in A \quad\iff\quad \exists x \in \bool^n : f_j(x) \neq f_j(x+e_i).$$
The arcs of the interaction digraph may be assigned signs $\sigma:A \to
\{+,-,\pm\}$ as follows:
\begin{itemize}
  \item $\sigma(i,j)=+$ when $\exists x \in \bool^n : x_i=0 \wedge f_j(x) > f_j(x+e_i)$,
  \item $\sigma(i,j)=-$ when $\exists x \in \bool^n : x_i=0 \wedge f_j(x) < f_j(x+e_i)$,
  \item $\sigma(i,j)=\pm$ when both conditions above hold.
\end{itemize}

For convenience, we may use various symbols to denote the components of the
network, but as it will always be a finite set a bijection with $[n]$ is
straightforward. The {\em size} of a BN is its number of components.

\subsection{Update schedules}

The {\em configuration space} is $\bool^n$, and it remains to explain how
components are updated. Given a BN $f$, a configuration $x$ and a subset $I
\subseteq [n]$, we denote\footnote{Parenthesis are used to differentiate update
schedules from iterations of a function.} $f^{(I)}(x)$ the configuration
obtained by updating components of $I$ only, {\em i.e.}
$$
  \text{for any } i \in [n],\quad f^{(I)}(x)_i
  = \left\{\begin{array}{ll}
    f_i(x) & \text{ if } i \in I\\
    x_i & \text{ otherwise.}
  \end{array}\right.
$$
Remark that $f^{([n])}=f$. A {\em block-sequential update schedule} is an
ordered partition of $[n]$, denoted $W=(W_1,\dots,W_t)$, and a BN $f$ updated
according to $W$ gives the deterministic discrete dynamical system on $\bool^n$
defined as
$$f^{(W)}=f^{(W_t)} \circ \dots \circ f^{(W_2)} \circ f^{(W_1)}.$$
The update schedule $([n])$ is called {\em parallel} (or {\em synchronous}).

\subsection{Attractors}

Given that the configuration space is finite and the dynamics is deterministic,
the orbit of any configuration convergences to a {\em fixed point} (a
configuration $x$ such that $f^{(W)}(x)=x$) or to a {\em limit-cycle} (a
configuration $x$ such that $(f^{(W)})^k(x)=x$ for some {\em length} $k \in
\N_+$, and such that $(f^{(W)})^{\ell}(x) \neq x$ for any $\ell \in [k-1]$). A
fixed point is a limit-cycle of length one, a limit-cycle is assimilated to any
of its configurations, and has a unique length.

Given a BN $f$, an update schedule $W$, and $k \in \N_+$, we denote
$\Phi_k(f^{(W)})$ the set of configurations in limit-cycles of length $k$,
{\em i.e.}
$$\Phi_k(f^{(W)})=\{ x \in \bool^n \mid (f^{(W)})^k(x)=x \text{ and } \forall\ 1 \leq \ell < k, (f^{(W)})^\ell(x) \neq x \}$$
and $\phi_k(f^{(W)})=\frac{|\Phi_k(f^{(W)})|}{k}$ the number of limit-cycles of
length $k$. Remark that for a fixed $k$, the quantity $\phi_k(f^{(W)})$ may
vary depending on 
$W$ (see Figure~\ref{fig:W}).

For retroaction cycles (such as those of Figure~\ref{fig:W}), the
dynamical behavior in terms of number of limit-cycles of size $k$,
whatever the update schedule, is entirely characterized
in~\cite{J-Demongeot2012} on the basis of~\cite{C-Goles2010}.

\begin{figure}
  \centerline{\includegraphics{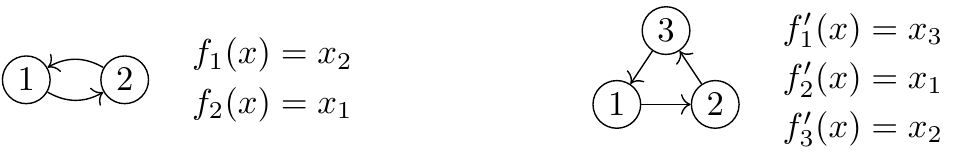}}
  \caption{Two BNs and their respective interaction digraphs (all arcs are
  positive). Left: for $W=(\{1\},\{2\})$ we have
  $\phi_2(f^{(W)})=0$, whereas for the parallel mode we have
  $\phi_2(f)=1$. Right: for $W'=(\{1\},\{2,3\})$ we have $\phi_2(f'^{(W')})=1$
  with $\0\0\1 \leftrightarrow \1\1\0$, whereas for the parallel mode we have
  $\phi_2(f')=0$.}
  \label{fig:W}
\end{figure}

\subsection{Problems}

\begin{remark}
  \label{remark:encoding}
  Note that an input BN $f$ is encoded with its local
  functions as propositional formulas
  (see also Remark~\ref{remark:truthtable} at the end).
\end{remark}

We are interested in the following decision problems related to attractors in
the dynamics of BNs, and especially limit-cycles.\\[.5em]
\decisionpb{$k$-limit-cycle problem ($k$-LC)}
{a BN $f$ updated in parallel.}
{does $\phi_k(f) \geq 1$?}

\begin{remark}
  \label{remark:k-limit-cycle-problem}
  $f$ updated in parallel is not a limitation here, since one can transform in
  polynomial time a BN $f$ and an updated schedule $W$ into a BN $f'$ updated
  in parallel such that $f'=f^{(W)}$ (simply construct local functions of $f'$
  from those of $f$ and $W$), as presented in~\cite{L-Robert1986}.
\end{remark}

\noindent
\decisionpb{Block-sequential $k$-limit-cycle problem (BS~$k$-LC)}
{a BN $f$.}
{does there exist $W$ block-sequential such that $\phi_k(f^{(W)}) \geq 1$?}
\\[.5em]
\decisionpb{Block-sequential no $k$-limit-cycle problem (BS~no~$k$-LC)}
{a BN $f$.}
{does there exist $W$ block-sequential such that $\phi_k(f^{(W)}) = 0$?}
\\[.5em]
Fixed points are invariant over block-sequential update schedules~\cite{L-Goles1990}, consequently 
{\bf $1$-LC} and {\bf BS~$1$-LC}
are identical.
However, the last two problems are not complement of each other,
because there exist some instance positive in both
(see Figure~\ref{fig:W} for an example).

For the reductions giving complexity lower bounds, we need the following
classical problems. For a formula $\psi$ on $\{\lambda_1,\dots,\lambda_n\}$ and
a partial assignment $v:\{\lambda_1,\dots,\lambda_s\} \to \bool$ for some $s
\in [n]$, we denote $\psi[v]$ the substitution $\psi[\lambda_1 \ot
v(\lambda_1),\dots,\lambda_s \ot v(\lambda_s)]$.
\\[.5em]
\decisionpb{3-SAT}
{a 3-CNF formula $\psi$ on $\{\lambda_1,\dots,\lambda_n\}$.}
{is $\psi$ satisfiable?}
\\[.5em]
\decisionpb{$\exists\forall$-3-SAT}
{a 3-CNF formula $\psi$ on $\{\lambda_1,\dots,\lambda_n\}$ and $s \in [n]$.}
{is there an assignment $v$ of $\lambda_1,\dots,\lambda_s$ such that\\
\phantom{\it Question:}
all assignments of $\lambda_{s+1},\dots,\lambda_n$ satisfy $\psi[v]$?}
\\[.5em]
{\bf 3-SAT} is a well known $\NP$-complete problem~\cite{J-Karp1972},
and {\bf $\exists\forall$-3-SAT} is $\NPNP$-complete~\cite{L-Papadimitriou1994}
(one level above in the polynomial hierarchy).
Also, note that $\NPNP=\NPcoNP$ since an oracle language or its complement are
equally useful.

\section{State of the art}
\label{sec:sota}

The {\bf $k$-limit-cycle problem} is known to be $\NP$-complete for 
$k=1$~\cite{J-Floreen1989}, and the fixed points of a BN are invariant for any 
block-sequential update schedule~\cite{L-Goles1990}. It has been proven 
in~\cite{J-Aracena2013b} that given a BN $f$, it is $\NP$-complete to know 
whether there exist two block-sequential update schedules $W,W'$ such
that $f^{(W)} \neq f^{(W')}$ (that is, they differ on at least one
configuration). This problem is indeed surprisingly difficult, but the proof
relies on a basic construction similar to Theorem~\ref{th:k-limit-cycle-problem} 
for $k=1$. More over, in~\cite{J-Aracena2013}, the authors study the 
computational complexity of limit cycle problems. Given a BN $f$, an update 
schedule $W$ and a limit-cycle $C$ of $f^{(W)}$, it is $\NP$-complete to know 
whether there exists another update schedule $W'$ (not equivalent to $W$) such 
that $f^{(W)}$ also has the limit-cycle $C$. Some variants of this problem are 
deduced to be $\NP$-complete: knowing whether the sets of limit cycles are 
equal, and whether the sets of limit-cycles share at least one element. This 
work focuses on finding block-sequential update schedules sharing limit cycles.
After writing this article, we learned that the PhD thesis of 
G\'omez~\cite{T-Gomez2015} contains results of a very close flavor: given a BN 
$f$, determining whether it is possible to find a block-sequential update 
schedule $W$ such that $f^{(W)}$ has at least one limit cycle (of any length 
greater than two) is $\NP$-complete, even when restricted to AND-OR networks. 
Moreover, the problem of finding a block-sequential $W$ such that $f^{(W)}$ has 
only fixed points is $\NP$-hard. In the sequel, we prove an analogous bound for 
the existence problem (Corollary~\ref{coro:BS-k-limit-cycle-problem}. Our 
construction also has only AND-OR local functions), and a stronger tight bound 
for the non-existence problem 
(Theorems~\ref{th:BS-no-k-limit-cycle-problem-hard-even}, 
\ref{th:BS-no-k-limit-cycle-problem-hard-k} and 
Corollary~\ref{coro:BS-no-k-limit-cycle-problem-hard-2}).
As a difference, in our setting the length of the limit-cycle is fixed
in the problem definition. It is also proven in~\cite{T-Gomez2015} that
given a BN $f$ and two configurations $x,y$, is there a $W$ such that $f^{(W)}
(x)=y$? is an $\NP$-complete problem.

Eventually, questions on the maximum number of fixed points possible when only 
the interaction digraph of a BN is provided, have already let some complexity
classes higher than $\NP$ appear in problems related to the attractors of 
BNs~\cite{C-Bridoux2019}.

\section{Complexity of limit-cycle problems}
\label{sec:pb}

The constructions presented in this section are gradually extended with more
involved arrangements of components, to prove complexity lower bounds from
formula satisfaction problems. The first result adapts a folklore proof for
fixed points (case $k=1$).

\begin{theorem}
  \label{th:k-limit-cycle-problem}
  {\bf $k$-LC} is $\NP$-complete for any $k \in \N_+$.
\end{theorem}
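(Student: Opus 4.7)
For $\NP$ membership, a configuration $x \in \bool^m$ (where $m$ is the size of $f$) serves as certificate: since $k$ is fixed, one evaluates $f(x), f^2(x), \ldots, f^k(x)$ in polynomial time (each iterate requires evaluating the $m$ local propositional formulas) and verifies that $f^k(x) = x$ while $f^\ell(x) \neq x$ for all $1 \leq \ell < k$.

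For $\NP$-hardness, the plan is to reduce from \textbf{3-SAT}, generalizing the classical fixed-point reduction (case $k = 1$). Given a 3-CNF formula $\psi$ on variables $\lambda_1, \ldots, \lambda_n$, I would build a BN $f$ with $n + k$ components: (i) identity ``assignment'' components $x_1, \ldots, x_n$ with $f_{x_i}(x, c) = x_i$, which freeze a candidate Boolean assignment $x$; and (ii) a ``ring'' of $k$ components $c_1, \ldots, c_k$ that behaves either as a pure cyclic shift or as a shift-with-negation depending on the truth value of $\psi(x)$. Concretely I take $f_{c_i}(x, c) = c_{i-1}$ for $2 \leq i \leq k$ and $f_{c_1}(x, c) = c_k \xor \neg \psi(x_1, \ldots, x_n)$. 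Each local function is a propositional formula of size polynomial in the size of $\psi$, so the reduction is polynomial.

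Correctness rests on two small observations about the ring dynamics $T_x$, which is the map induced on $(c_1,\ldots,c_k)$ once the frozen assignment $x$ is fixed. If $\psi(x) = 1$, then $T_x$ is a pure cyclic shift, so $T_x^k$ is the identity; the one-hot configuration $c = 10\cdots 0$ has minimal period exactly $k$ under $T_x$, yielding a length-$k$ limit-cycle of $f$. If $\psi(x) = 0$, then $T_x$ is a shift-with-negation and a direct computation gives $T_x^k(c) = \neg c \neq c$ for every $c$, so every orbit has period dividing $2k$ but strictly different from $k$. Combining the two regimes, $\phi_k(f) \geq 1$ iff $\psi$ is satisfiable.

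The main obstacle is to rule out accidental length-$k$ orbits in the unsatisfiable branch; variants such as a ``pure shift with a constant offset'' can inadvertently admit them. The key insight is precisely that $T_x^k$ equals the negation rather than the identity when $\psi(x) = 0$, which excludes length-$k$ cycles uniformly in the ring content and yields the sought tight separation between the two branches.
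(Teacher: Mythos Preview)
Your proof is correct and follows the same overall scheme as the paper: a 3-SAT reduction with identity components freezing a candidate assignment, plus a ring of $k$ components whose dynamics depends on the truth value of $\psi$. The ring gadget itself differs. The paper uses a token-passing ring $f_{\psi_i}(x)=\neg x_{\psi_i}\wedge x_{\psi_{i-1}}$ with the conjunction of clauses gating $\psi_1$; when $\psi$ is false the ring collapses to the all-$\0$ fixed point, and a special definition of $f_{\psi_1}$ is needed for $k=1$. Your gadget is a linear feedback shift $f_{c_1}=c_k\xor\neg\psi$: when $\psi$ is true it is the pure shift (one-hot has period exactly $k$), and when $\psi$ is false one gets $T_x^k=\neg$, hence no configuration can have period $k$. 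This buys you a uniform treatment of all $k\geq 1$ and avoids the separate clause components, at the cost of having limit-cycles of length $2k$ (rather than fixed points) in the unsatisfiable branch---which is harmless for the present statement but would require a tweak if one later wants the ``only fixed points'' dichotomy used elsewhere in the paper.
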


\begin{proof}
  The problem belongs to $\NP$ because one can check in polynomial time a
  certificate consisting of one configuration $x \in \bool^n$ of the
  limit-cycle of length $k$.
  Indeed, to check that $x \in \bool^n$ is in a limit-cycle of size $k$, it is
  sufficient to check that $f(x), \dots, f^{k-1}(x)$ are different from $x$ and
  that $f^k(x)$ equals $x$.

  To show that it is $\NP$-hard, we present a reduction from {\bf 3-SAT}. Given
  a 3-CNF formula $\psi$ on $\{\lambda_1,\dots,\lambda_n\}$ with $m$ clauses
  $C_1,\dots,C_m \in (\{\lambda_1,\dots,\lambda_n\} \cup
  \{\neg\lambda_1,\dots,\neg\lambda_n\})^3$, we construct the following BN of
  size $n+m+k$. The components are
  $$\{ \lambda_1,\dots,\lambda_n \} \cup \{ C_1,\dots,C_m \} \cup \{ \psi_1,\dots,\psi_k \}$$
  and the local functions are
  \begin{itemize}
    \item $f_{\lambda_i}(x)=x_{\lambda_i}$ for $i \in [n]$,
    \item $f_{C_j}(x)=\bigvee_{\lambda_i \in C_j} x_{\lambda_i} \vee \bigvee_{\neg\lambda_i \in C_j} \neg x_{\lambda_i}$ for $j \in [m]$,
    \item $f_{\psi_1}(x)=\neg x_{\psi_1} \wedge x_{\psi_k} \wedge ( x_{C_1} \wedge \dots \wedge x_{C_m} )$,
    \item $f_{\psi_i}(x)=\neg x_{\psi_i} \wedge x_{\psi_{i-1}}$ for $i \in \{2,\dots,k\}$.
  \end{itemize}
  If $k=1$, then we set $f_{\psi_1}(x)=\neg x_{\psi_1} \vee ( x_{C_1} \wedge
  \dots \wedge x_{C_m} )$. An example signed interaction digraph of this BN is
  presented on Figure~\ref{fig:k-limit-cycle}.
  
  The idea is that to get a limit-cycle of length $k$, one is forced to find in
  $x_{\lambda_1},\dots,x_{\lambda_n}$ an assignment satisfying $\psi$, in order
  to have $x_{C_j}=\1$ for all $j \in [m]$ and a configuration cycling through
  $x_{\phi_1},\dots,x_{\phi_k}$. Otherwise if
  $x_{\lambda_1},\dots,x_{\lambda_n}$ does not satisfy $\psi$, then the
  attractor is a fixed point (except for the case $k=1$). The articulation
  between the formula assignment and the limit-cycle of length $k$ hinges upon
  $f_{\psi_1}$.

  \begin{figure}
    \centerline{\includegraphics{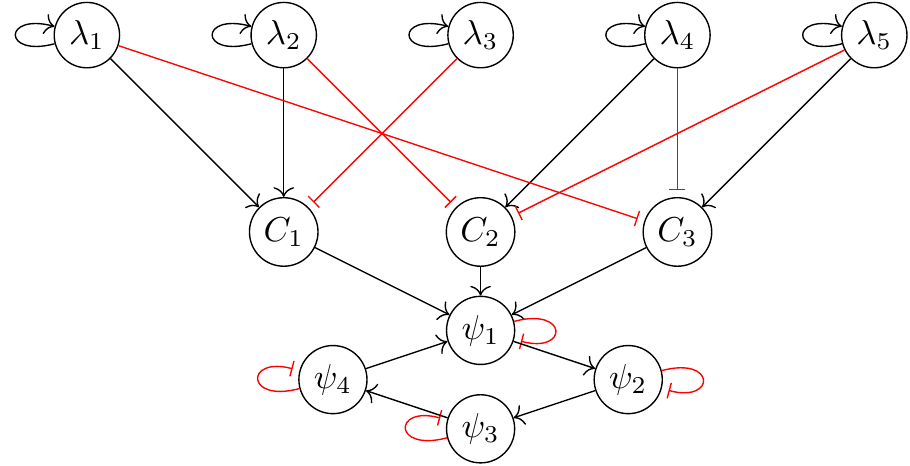}}
    \caption{Signed interaction digraph of the BN obtained for $k=4$ and the
    3-CNF formula $\psi=(\lambda_1 \vee \lambda_2 \vee \neg \lambda_3) \wedge
    (\neg \lambda_2 \vee \lambda_4 \vee \neg \lambda_5) \wedge (\neg \lambda_1
    \vee \neg \lambda_4 \vee \lambda_5)$. Negative arcs ($-$) are red with a
    flat head, positive arcs ($+$) are black (there are no $\pm$ arcs).}
    \label{fig:k-limit-cycle}
  \end{figure}

  \medskip

  Let us now prove that $\psi$ is satisfiable if and only if the BN has a
  limit-cycle of length $k$. Suppose $\psi$ is satisfied for
  $v:\{\lambda_1,\dots,\lambda_n\} \to \bool$, then the following configuration
  $x \in \bool^{n+m+k}$ is part of a limit-cycle of length $k$:
  \begin{itemize}
    \item $x_{\lambda_i}=v(\lambda_i)$ for all $i \in [n]$,
    \item $x_{C_j}=\1$ for all $j \in [m]$,
    \item $x_{\psi_1}=\1$ and $x_{\psi_2}= \dots =x_{\psi_k}=\0$.
  \end{itemize}
  Indeed, the state of components $\{\lambda_1,\dots,\lambda_n\} \cup
  \{C_1,\dots,C_m\}$ do not change, and the unique state $\1$ in the
  cycle\footnote{The cycle in the interaction digraph.} of components
  $\{\psi_1,\dots,\psi_k\}$ moves one component forward at each step (all other
  components being in state $\0$), and comes back to the initial configuration
  $x$ in $k$ steps, {\em i.e.} $f^k(x)=x$.

  \medskip

  For the reverse direction, suppose there is a limit-cycle of length $k$, and
  let $x$ be one of its configurations. Remark that in any attractor, the
  states of components $\{\lambda_1,\dots,\lambda_n\}$ are fixed, and so are
  the states of components $\{C_1,\dots,C_m\}$. As a consequence, in the local
  function $f_{\psi_1}$, the evaluation of the part $(x_{C_1} \wedge \dots
  \wedge x_{C_m})$ is fixed. For the sake of contradiction suppose that it is
  evaluated to $\0$, then so is $x_{\psi_1}$, then so is $x_{\psi_2}$, {\em
  etc}, and $x$ is a fixed point (in the case $k=1$ we have $f_{\psi_1}(x) \neq
  x_{\psi_1}$). Therefore, components $\{C_1,\dots,C_m\}$ are all in state $\1$,
  which, according to their local functions, is possible if and only if each of
  them has at least one of its predecessors in state $\1$ if it appears
  positively in the corresponding clause, or in state $\0$ if it appears
  negatively. As a conclusion the states of components
  $\{\lambda_1,\dots,\lambda_n\}$ in $x$ correspond to a valuation satisfying
  $\psi$.
\end{proof}

The second result initiates the consideration of update schedules in complexity
studies of the dynamics of BNs.
However, with an additional existential
quantification on the update schedule the problem remains $\NP$-complete (there
was already an existential quantification on configurations for the existence
of a limit-cycle), and it turns out that the same construction proves it.

\begin{corollary}
  \label{coro:BS-k-limit-cycle-problem}
  {\bf BS $k$-LC} is $\NP$-complete for any
  $k \in \N_+$.
\end{corollary}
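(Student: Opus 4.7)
The plan is to recycle the construction of Theorem~\ref{th:k-limit-cycle-problem} verbatim, so only two additional points are needed: an $\NP$ membership proof that allows the update schedule to be part of the certificate, and a check that the soundness of the reduction survives the extra existential quantification on $W$.

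For membership in $\NP$, a certificate is a pair $(W,x)$ where $W$ is a block-sequential update schedule of the component set (an ordered partition, hence of polynomial size) and $x$ is a configuration. The verifier computes the orbit $x,f^{(W)}(x),(f^{(W)})^2(x),\dots,(f^{(W)})^k(x)$ in polynomial time (each application of $f^{(W)}$ is a polynomial number of local function evaluations), and checks that the first $k-1$ iterates differ from $x$ while the $k$-th equals $x$.

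For $\NP$-hardness, the case $k=1$ is immediate since the paper recalls that fixed points are invariant under every block-sequential schedule, so {\bf BS $1$-LC} coincides with {\bf $1$-LC}, which is $\NP$-complete by Theorem~\ref{th:k-limit-cycle-problem}. For $k \geq 2$, I would reuse exactly the same reduction from {\bf 3-SAT} and the same BN $f$ of size $n+m+k$. The forward direction is free: if $\psi$ is satisfiable, Theorem~\ref{th:k-limit-cycle-problem} already exhibits a limit-cycle of length $k$ for $f$ updated in parallel, and the parallel schedule $([n+m+k])$ is itself block-sequential.

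The main obstacle is the reverse direction, where I must show that for every block-sequential $W$, $\phi_k(f^{(W)}) \geq 1$ implies that $\psi$ is satisfiable. The argument mirrors Theorem~\ref{th:k-limit-cycle-problem} but exploits the rigidity of the construction. The components $\lambda_i$ have identity self-loop local functions, so their states are preserved by every update regardless of $W$ and stay constant along any orbit. Consequently, in any attractor the states of $C_1,\dots,C_m$ must equal the truth values of the corresponding clauses under the $\lambda$-assignment. Suppose for contradiction that this assignment does not satisfy $\psi$: then $x_{C_1} \wedge \cdots \wedge x_{C_m} = \0$ throughout the attractor, so every update of $\psi_1$ sets it to $\0$. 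A short induction on $i$ shows that after $i$ applications of $f^{(W)}$, components $\psi_1,\dots,\psi_i$ are in state $\0$ and remain so forever; the inductive step uses that when $\psi_i$ is updated during the $(i+1)$-th round, the value of $x_{\psi_{i-1}}$ read is $\0$, whether $\psi_{i-1}$ has already been updated in that round or not, because it starts the round at $\0$ by the inductive hypothesis and can only be updated to $\neg\0 \wedge x_{\psi_{i-2}} = x_{\psi_{i-2}} = \0$. After $k$ rounds the orbit reaches the fixed point with all $\psi_i = \0$, contradicting the existence of a limit-cycle of length $k \geq 2$.
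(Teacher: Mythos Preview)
Your proof is correct and follows essentially the same approach as the paper: the same $\NP$ certificate $(W,x)$, the same recycled reduction from Theorem~\ref{th:k-limit-cycle-problem}, the parallel schedule for the forward direction, and the observation that when $\psi$ is unsatisfiable the $\lambda$-part is fixed, the $C_j$ settle to the clause values, and $x_{\psi_1}=\0$ propagates through the $\psi$-cycle regardless of $W$. Your version is more explicit on this last point (the induction on~$i$ through the $\psi$-cycle under an arbitrary block-sequential $W$) and handles $k=1$ separately via fixed-point invariance, which is slightly cleaner than the paper's one-line ``$x_{\psi_1}=\0$ fixes the cycle''.
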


\begin{proof}
  This problem still belongs to the class $\NP$, as one can check in polynomial time
  a certificate consisting of a block-sequential $W$ on $[n]$
  and one configuration $x \in \bool^n$ of the limit-cycle of length $k$.
  Indeed, it is sufficient to check that $f^{(W)}(x), \dots,
  {(f^{(W)})}^{k-1}(x)$ are different from $x$ and that ${(f^{(W)})}^{k}(x)$
  equals $x$.
 
  For the $\NP$-hardness we use the same construction as in the proof of
  Theorem~\ref{th:k-limit-cycle-problem}. Indeed, remark that the existential
  quantification on a block-sequential update schedule fits the reasoning. For
  the left to right direction of the {\em if and only if} we use the same $x$
  with $W=[n]$.
  And for the reverse direction, if $\psi$ is not satisfiable
  then for any block-sequential update schedule any configuration
  converges to a fixed point (the upper part
  is always fixed, and $x_{\psi_1}=\0$ fixes the cycle).
  %
  %
\end{proof}

We have seen in Theorem~\ref{th:k-limit-cycle-problem} and
Corollary~\ref{coro:BS-k-limit-cycle-problem} that with two consecutive
existential quantifications (one for a block-sequential update schedule and one
for a configuration of a limit-cycle) the problem remains in $\NP$. However,
{\bf BS no $k$-LC} corresponds to an existential
quantification (for a block-sequential update schedule) followed by a universal
quantification (for the absence of a limit-cycle). The next results therefore
jump one level above in the polynomial hierarchy.

\begin{theorem}
  \label{th:BS-no-k-limit-cycle-problem-in}
  {\bf BS no $k$-LC} is in $\NPNP$
  for any $k \in \N_+$.
\end{theorem}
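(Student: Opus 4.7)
The plan is to match the quantifier structure of the problem, which is $\exists W\,\forall x$, by exhibiting a $\Sigma_2^P$ algorithm and invoking the equality $\Sigma_2^P = \NPNP$ (equivalently $\NPcoNP$, as already noted after the problem definitions).

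First, I would nondeterministically guess the block-sequential update schedule $W$. Since $W$ is an ordered partition of $[n]$, it has at most $n$ blocks and can be encoded in $O(n\log n)$ bits, which is polynomial in $|f|$. This handles the outer existential quantification within the nondeterministic base machine.

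Second, I would reduce the remaining verification, namely $\phi_k(f^{(W)})=0$, to a single oracle call. By Remark~\ref{remark:encoding} together with Remark~\ref{remark:k-limit-cycle-problem}, from the pair $(f,W)$ one can construct in polynomial time a BN $f'$ (encoded by propositional formulas for each local function) such that $f'=f^{(W)}$. Then the question ``does $f'$ have at least one limit-cycle of length $k$?'' is exactly an instance of $k$-LC, which by Theorem~\ref{th:k-limit-cycle-problem} lies in $\NP$. The nondeterministic machine passes this instance to its $\NP$ oracle and accepts iff the oracle rejects; using the complement of the oracle answer is legitimate because $\NPNP=\NPcoNP$.

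Altogether this yields a polynomial-time nondeterministic algorithm with a single query to an $\NP$ oracle that accepts $f$ iff there exists a block-sequential $W$ with $\phi_k(f^{(W)})=0$, placing the problem in $\NPNP$. There is no real obstacle: the only points to double-check are that $W$ has polynomial-size encoding and that the transformation $(f,W)\mapsto f'$ of Remark~\ref{remark:k-limit-cycle-problem} preserves the propositional-formula encoding assumed in the $\NP$ upper bound for $k$-LC, both of which are routine.
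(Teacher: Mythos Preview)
Your proposal is correct and matches the paper's own proof essentially step for step: nondeterministically guess $W$, then use Remark~\ref{remark:k-limit-cycle-problem} and Theorem~\ref{th:k-limit-cycle-problem} to see that deciding $\phi_k(f^{(W)})=0$ is the complement of {\bf $k$-LC} and hence in $\coNP$, and conclude via $\NPNP=\NPcoNP$. The only differences are cosmetic (you spell out the polynomial size of $W$ and frame the check as a single $\NP$-oracle query with inverted answer rather than a $\coNP$-oracle query).
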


\begin{proof}
  The problem belongs to the class $\NPcoNP=\NPNP$, as one can guess
  non-deterministically a block-sequential update schedule $W$ and then check
  in polynomial time (in $\NP$), using an oracle in $\coNP$, whether
  $\phi_k(f^{(W)})=0$. Once $W$ is fixed this last question is indeed in
  $\coNP$, as it is the complement of {\bf k-LC}, see
  Remark~\ref{remark:k-limit-cycle-problem} and
  Theorem~\ref{th:k-limit-cycle-problem}.
\end{proof}

The hardness proof is splitted into three results, developing
some incremental mechanisms and constructions.

\begin{theorem}
  \label{th:BS-no-k-limit-cycle-problem-hard-even}
  {\bf BS no $k$-LC} is
  $\NPNP$-hard
  for all $k$ even and strictly greater than $2$.
\end{theorem}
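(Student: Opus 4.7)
The plan is to reduce from $\exists\forall$-3-SAT. Given an instance $(\psi, s)$ with $\psi$ a 3-CNF over $\{\lambda_1, \ldots, \lambda_n\}$ and $\lambda_1, \ldots, \lambda_s$ existentially quantified, I construct a BN $f$ of polynomial size such that there exists a block-sequential schedule $W$ with $\phi_k(f^{(W)}) = 0$ if and only if there exists an assignment $v$ of $\lambda_1, \ldots, \lambda_s$ under which every extension $v'$ satisfies $\psi$. The intuition is that $W$ plays the role of $v$, while the free coordinates of limit-cycle configurations play the role of $v'$.

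The clause-checking apparatus is a \emph{flipped} variant of the construction in Theorem~\ref{th:k-limit-cycle-problem}, so that a $k$-limit cycle witnesses a \emph{falsifying} assignment rather than a satisfying one. Concretely, I use clause components $f_{C_j}(x) = \bigwedge_{\lambda_i \in C_j} \neg x_{\lambda_i} \wedge \bigwedge_{\neg \lambda_i \in C_j} x_{\lambda_i}$, so that $x_{C_j} = \1$ iff $C_j$ is falsified, and replace the $\bigwedge_j x_{C_j}$ in $f_{\psi_1}$ by $\bigvee_j x_{C_j}$, yielding $f_{\psi_1}(x) = \neg x_{\psi_1} \wedge x_{\psi_k} \wedge \bigvee_j x_{C_j}$ together with the same chain $f_{\psi_i}(x) = \neg x_{\psi_i} \wedge x_{\psi_{i-1}}$ for $i = 2, \ldots, k$. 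The universal variables $\lambda_{s+1}, \ldots, \lambda_n$ keep self-loops $f_{\lambda_i}(x) = x_{\lambda_i}$, so their attractor values range freely over $\bool^{n-s}$.

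The main obstacle is to design, for each existential variable $\lambda_i$ with $i \in [s]$, a \emph{schedule-controlled} gadget of constant size whose auxiliary components, together with the chosen block-sequential $W$, pin the attractor value of $\lambda_i$ to a bit encoded by $W$ (and not by the initial configuration). A natural shape uses two marker components whose relative order in $W$ encodes the bit, coupled with a small feedback arrangement (negated couplings and possibly a toggling ``clock'') that forces $\lambda_i$ to a definite value in every attractor. I need three properties: (i) every assignment $v : \{\lambda_1, \ldots, \lambda_s\} \to \bool$ is realized by some $W$; (ii) each $W$ unambiguously determines the $\lambda_i$ values on all attractors; and (iii) the gadgets do not themselves create limit cycles of length $k$, so that the only route to a $k$-cycle is through the clause apparatus. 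The parity hypothesis (even $k > 2$) is likely forced by the gadget carrying an internal $2$-cycle: $k$ must be even so that gadget and $\psi$-chain periods combine into a single $k$-cycle via $\operatorname{lcm}$, and $k = 2$ is excluded because the gadget would itself provide a spurious $k$-cycle.

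With such a gadget in hand, correctness proceeds as in Theorem~\ref{th:k-limit-cycle-problem}: fix $W$ encoding $v$; every attractor configuration has the existential $\lambda_i$ pinned to $v(\lambda_i)$ and the universal $\lambda_i$ carrying some assignment $v'$, and a $k$-limit cycle exists iff $(v, v')$ falsifies some clause. Hence $\phi_k(f^{(W)}) = 0$ iff every $v'$ satisfies $\psi[v]$, and $\exists W$ with $\phi_k(f^{(W)}) = 0$ iff $\exists v \, \forall v' : \psi[v, v']$, which is the desired equivalence.
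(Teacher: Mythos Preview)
Your high-level strategy matches the paper's (reduce from {\bf $\exists\forall$-3-SAT}, encode the existential assignment in $W$, let universal variables be self-loops, and drive a $\psi$-chain with the clause gates), but there is a genuine gap in the contrapositive direction. When $\psi$ is a negative instance you must show that \emph{every} block-sequential $W$ admits a $k$-limit cycle; yet you reuse the chain $f_{\psi_i}(x)=\neg x_{\psi_i}\wedge x_{\psi_{i-1}}$ from Theorem~\ref{th:k-limit-cycle-problem}, which was only analyzed in parallel. Under, say, $W$ updating $\psi_1,\psi_2,\dots,\psi_k$ sequentially in that order, a single $\1$ at $\psi_1$ dies in one macro-step (once $\psi_1$ flips to $\0$, each subsequent $\psi_i$ reads a fresh $\0$ from $\psi_{i-1}$), and the chain collapses to a fixed point even when some clause is falsified. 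The player therefore avoids all $k$-cycles simply by scheduling the $\psi$-chain sequentially, and the reduction breaks. The paper confronts exactly this: it adds a clock $\Omega$ with $f_\Omega(x)=\neg x_\Omega$ and lets $f_{\psi_i}$ copy $x_{\psi_{i-1\!\!\mod k}}$ only when $x_\Omega$ has the parity matching $i$, holding otherwise. A case analysis on (parity of $i$) $\times$ ($\psi_i$ before/after $\Omega$ in $W$) shows that no $\psi_i$ can ever read a predecessor that has already been refreshed in the same step, so a pair of consecutive $\1$'s advances one position per step under \emph{every} $W$. This mechanism is also what forces $k$ even (the $\Omega$-flip gives every attractor even length) and $k>2$ (otherwise the satisfied case is itself a $2$-cycle); your parity remark was on the right track but not tied to the actual obstruction.

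A secondary gap is that you only list desiderata for the existential gadget without constructing it. The paper reuses the same clock: $f_{\lambda'_i}=f_{\lambda''_i}=x_\Omega$ and $f_{\lambda_i}=x_{\lambda'_i}\xor x_{\lambda''_i}$, so that placing $\lambda'_i$ before $\Omega$ and $\lambda''_i$ after (versus both after) in $W$ fixes $x_{\lambda_i}$ to $\1$ (versus $\0$) in every attractor. Without the clock, obtaining properties (i)--(iii) simultaneously is not obvious, and in any case the $\psi$-chain issue above must be repaired first.
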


\begin{proof}
  We present a reduction from {\bf $\exists\forall$-3-SAT}.
  Given a 3-CNF formula $\psi$ on
  $\{\lambda_1,\dots,\lambda_n\}$ with $m$ clauses denoted as usual
  $C_1,\dots,C_m$, and an integer $s \in [n]$, we construct the following BN
  of size $2s+n+m+k+2$. The components are
  $$
    \{ \Omega,\psi \} \cup
    \{ \lambda_1,\dots,\lambda_n \} \cup
    \{ \lambda'_1,\dots,\lambda'_s \} \cup
    \{ \lambda''_1,\dots,\lambda''_s \} \cup
    \{ C_1,\dots,C_m \} \cup
    \{ \psi_0,\dots,\psi_{k-1} \}
  $$
  and the local functions are
  \begin{itemize}
    \item $f_\Omega(x)=\neg x_\Omega$,
    \item $f_{\lambda'_i}(x)=f_{\lambda''_i}(x)=x_\Omega$ for $i \in [s]$,
    \item $f_{\lambda_i}(x)=x_{\lambda'_i} \xor x_{\lambda''_i}$ for $i \in [s]$,
      and $f_{\lambda_i}(x)=x_{\lambda_i}$ for $i \in [n] \setminus [s]$,
    \item $f_{C_j}(x)=\bigvee_{\lambda_i \in C_j} x_{\lambda_i} \vee \bigvee_{\neg\lambda_i \in C_j} \neg x_{\lambda_i}$ for $j \in [m]$,
    \item $f_{\psi}(x)=x_{C_1} \wedge \dots \wedge x_{C_m}$,
    \item if $i \in \{0,\dots,k-1\}$ is even then $f_{\psi_i}(x)=
      \left\{\begin{array}{ll}
        x_{\psi_i} & \text{ if } x_{\psi}=\1 \vee x_\Omega=\0\\
        x_{\psi_{i-1 \mod k}} & \text{ otherwise}
      \end{array}\right.$,
    \item if $i \in \{0,\dots,k-1\}$ is odd then $f_{\psi_i}(x)=
      \left\{\begin{array}{ll}
        x_{\psi_i} & \text{ if } x_{\psi}=\1 \vee x_\Omega=\1\\
        x_{\psi_{i-1}} & \text{ otherwise}
      \end{array}\right.$.
  \end{itemize}
  An example signed interaction digraph of this BN is presented on
  Figure~\ref{fig:BS-no-k-limit-cycle-even}.

  The idea is that to prevent a possible limit-cycle of length $k$ to take
  place on components $\{\psi_0,\dots,\psi_{k-1}\}$, one is forced to solve
  the {\bf $\forall\exists$-3-SAT} instance and let $x_\psi=\1$ in any
  configuration $x$ that is part of an attractor. The existential variables are
  assigned in the block-sequential update schedule (on the updates of
  $\lambda_i$, $\lambda'_i$ and $\lambda''_i$ relative to the update of
  $\Omega$, for $i \in [s]$), and the universal variables all appear in both states in
  attractors (thanks to the positive loops on components
  $\{\lambda_{s+1},\dots,\lambda_{n}\}$).

  \begin{figure}
    \centerline{\includegraphics{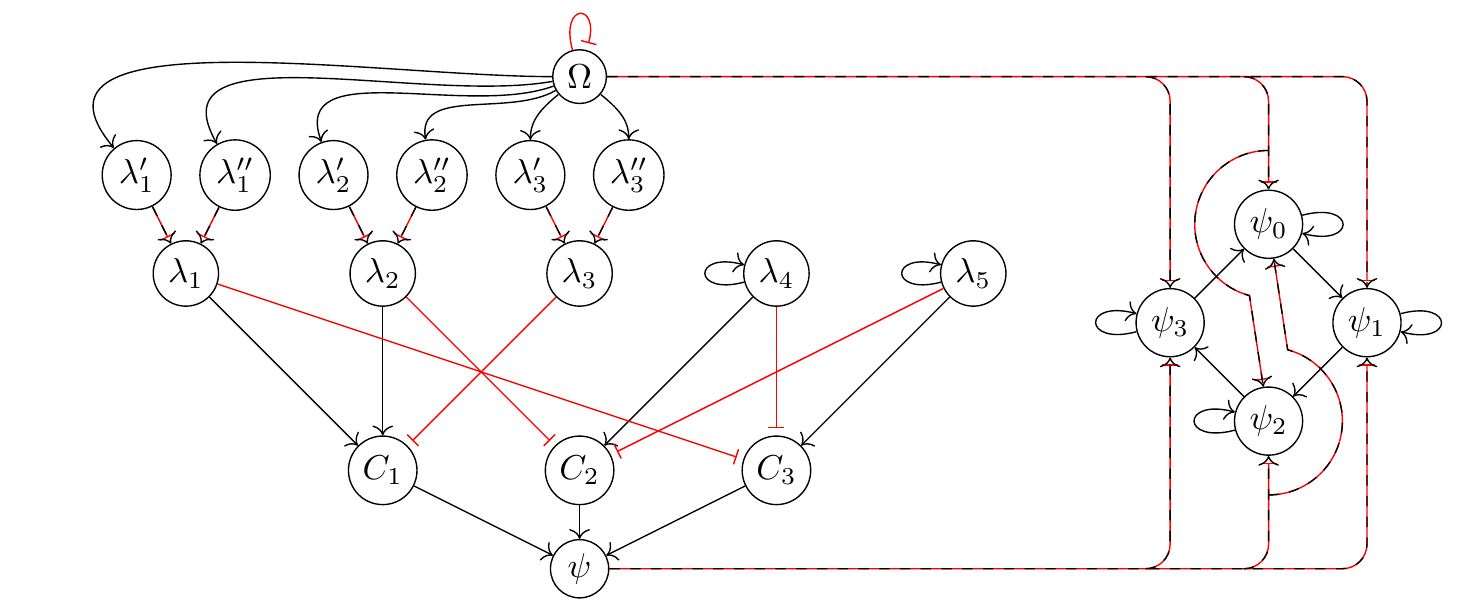}}
    \caption{Signed interaction digraph of the BN obtained for $k=4$, the
    3-CNF formula $\psi=(\lambda_1 \vee \lambda_2 \vee \neg \lambda_3) \wedge
    (\neg \lambda_2 \vee \lambda_4 \vee \neg \lambda_5) \wedge (\neg \lambda_1
    \vee \neg \lambda_4 \vee \lambda_5)$, and $s=3$. Negative arcs ($-$) are
    red with a flat head, positive arcs ($+$) are black, positive-negative arcs
    ($\pm$) are dashed with both colors and heads. Components $\Omega$ and
    $\psi$ are both connected to components $\psi_0$, $\psi_1$, $\psi_2$ and
    $\psi_3$ with arcs of sign $\pm$.}
    \label{fig:BS-no-k-limit-cycle-even}
  \end{figure}

  \medskip

  Let us now prove that there exists an assignment
  $v:\{\lambda_1,\dots,\lambda_s\} \to \bool$ such that all assignments
  $v':\{\lambda_{s+1},\dots,\lambda_n\} \to \bool$ verify $\psi[v][v'] \equiv
  \1$, if and only if there exists a block-sequential update schedule $W$ such
  that $f^{(W)}$ has a no limit-cycle of length $k$. Suppose there exists such
  an assignment $v$, then we define
  $$
    W=(T',\{\Omega\},F' \cup \{ \lambda''_1,\dots,\lambda''_s \},
    \{\lambda_1,\dots,\lambda_n\} \cup \{C_1,\dots,C_m\} \cup
    \{\psi\} \cup \{\psi_0,\dots,\psi_{k-1}\})
  $$
  with $T'=\{ \lambda'_i \mid v(\lambda_i)=\1 \}$ and $F'=\{ \lambda'_i \mid
  v(\lambda_i)=\0 \}$ (for $i \in [s]$). We claim that $f^{(W)}$ has no
  limit-cycle of length $k$. Indeed, the state of components
  $\{\lambda_1,\dots,\lambda_s\}$ correspond to the valuation $v$,
  because $\lambda'_i$ and $\lambda''_i$ for positive (resp. negative)
  variables are updated before and strictly after
  (resp. both strictly after) component $\Omega$ flips his
  state when it is updated, therefore are equal (resp. not equal) when
  local functions
  $f_{\lambda_i}$ compute their xor. The states of components $\{\Omega\}
  \cup \{\lambda'_1,\dots,\lambda'_s\} \cup \{\lambda''_1,\dots,\lambda''_s\}$
  all flip at each step (hence the required conditions on $k$), but the states
  of components $\{\lambda_1,\dots,\lambda_s\}$ are fixed. The states of
  components $\{\lambda_{s+1},\dots,\lambda_n\}$ are also fixed in any
  attractor, to arbitrary values among $\bool$. As $v$ satisfies $\psi$ for any
  valuation $v':\{\lambda_{s+1},\dots,\lambda_n\} \to \bool$, the states of
  components $\{C_1,\dots,C_m\}$ and $\psi$ are all fixed to $\1$ in any
  attractor. Hence, in any attractor we have:
  \begin{itemize}
    \item $\Omega$ flips its state at each time step,
    \item $\psi$ is fixed to state $\1$.
  \end{itemize}
  The local functions of components $\{\psi_0,\dots,\psi_{k-1}\}$ are designed
  to prevent any limit-cycle of length $k$ in this case: each of them is of the
  form $f_{\psi_i}(x)=x_{\psi_i}$, {\em i.e.} fixed. As a conclusion any
  attractor is in a limit-cycle of length $2 \neq k$.

  \medskip

  For the reverse direction we consider the contrapositive, suppose that there is
  no assignment $v:\{\lambda_1,\dots,\lambda_s\} \to \bool$ such that all
  assignments $v':\{\lambda_{s+1},\dots,\lambda_n\} \to \bool$ verify
  $\psi[v][v'] \equiv \1$. From what precedes, for any block-sequential update
  schedule $W$ there exists a configuration $x$ part of an attractor, with
  $x_{\lambda_{s+1}},\dots,x_{\lambda_{n}}$ chosen such that the state of
  $\psi$ is fixed to $\0$. Without loss of generality let use set
  $x_\Omega=\1$. Recall that in any attractor the states of components
  $\{\Omega\}\cup\{\lambda'_1,\dots,\lambda'_s\}\cup
  \{\lambda''_1,\dots,\lambda''_s\}$
  flip at each time step, the states of components
  $\{\lambda_1,\dots,\lambda_n\} \cup \{C_1,\dots,C_m\}$ are fixed, and that
  $k$ is even. Now if we let $x_{\psi_0}=x_{\psi_1}=\1$ and
  $x_{\psi_2}=\dots=x_{\psi_{k-1}}=\0$, then we claim that $x$ is in a
  limit-cycle of length $k$. We have to consider that in $W$, each $\psi_i$ may
  either be updated before $\Omega$, or strictly after $\Omega$, and we also
  have to consider the parity of $i$. According to local functions
  $f_{\psi_i}$, and because the state of component $\psi$ is fixed to $\1$, the
  four cases are as follows (recall that initially $x_\Omega=\1$):
  \begin{itemize}
    \item if $i$ is even and $\psi_i$ is updated before $\Omega$, then
      component $\psi_i$ copies the state of $\psi_{i-1 \mod k}$ at even time
      steps and is unchanged at odd time steps,
    \item if $i$ is even and $\psi_i$ is updated strictly after $\Omega$, then
      component $\psi_i$ copies the state of $\psi_{i-1 \mod k}$ at odd time
      steps and is unchanged at even time steps,
    \item if $i$ is odd and $\psi_i$ is updated before $\Omega$, then
      component $\psi_i$ copies the state of $\psi_{i-1}$ at odd time
      steps and is unchanged at even time steps,
    \item if $i$ is odd and $\psi_i$ is updated strictly after $\Omega$, then
      component $\psi_i$ copies the state of $\psi_{i-1}$ at even time
      steps and is unchanged at odd time steps.
  \end{itemize}
  Now observe that in any case, thanks to the parity of $i$ and the order of
  $\psi_i$ relative to component $\Omega$, when $\psi_i$ copies the state of
  $\psi_{i-1 \mod k}$, it is not possible that $\psi_{i-1 \mod k}$ has already
  copied the state of $\psi_{i-2 \mod k}$. As a consequence, at each time step
  the couple of states $\1$ moves one component forward along the cycle
  $\{\psi_0,\dots,\psi_{k-1}\}$, and after $k$ time steps we have
  $(f^{(W)}(x))^k=x$ (and not before).
\end{proof}

In the construction above, the fact that $f_\Omega(x)=\neg x_\Omega$ imposes
that any configuration converges to a limit-cycle of even length. Component 
$\Omega$ acts as a clock. For $k=2$ we can adapt the construction by letting
$x_\psi$ stop this clock when the formula is satisfied,
then in this case any configuration converges to a fixed point.

\begin{corollary}
  \label{coro:BS-no-k-limit-cycle-problem-hard-2}
  {\bf BS no $k$-LC} is
  $\NPNP$-hard
  for $k=2$.
\end{corollary}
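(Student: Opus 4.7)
The plan is to adapt the construction of Theorem~\ref{th:BS-no-k-limit-cycle-problem-hard-even} by dropping the components $\psi_0,\dots,\psi_{k-1}$ (they are no longer needed, since the clock $\Omega$ itself supplies a 2-cycle when the formula is violated) and replacing the local function of $\Omega$ by $f_\Omega(x)=\neg x_\Omega\wedge\neg x_\psi$. Every other component and every other local function, including the XOR encoding of a partial assignment $v$ on $\lambda_1,\dots,\lambda_s$ via $\lambda'_i,\lambda''_i$, stays exactly as in Theorem~\ref{th:BS-no-k-limit-cycle-problem-hard-even}. The effect is that the clock flips as before while $x_\psi=\0$, but is forced to $\0$ and kept there as soon as $x_\psi=\1$, collapsing the 2-cycle previously supported by $\{\Omega\}\cup\{\lambda'_1,\dots,\lambda'_s\}\cup\{\lambda''_1,\dots,\lambda''_s\}$.

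The reduction is again from {\bf $\exists\forall$-3-SAT}, and I would use the same witness-encoding schedule $W$ as in Theorem~\ref{th:BS-no-k-limit-cycle-problem-hard-even}. For the forward direction, assume that some $v$ satisfies $\psi[v][v']$ for every $v'$. The claim is that $f^{(W)}$ admits no 2-cycle, which I would prove by a case analysis on the pair $(\psi(0),\psi(1))\in\bool^2$ extracted from a hypothetical 2-cycle $x^{(0)}\to x^{(1)}\to x^{(0)}$. The ingredients are the identity $\Omega(t+1)=\neg\Omega(t)\wedge\neg\psi(t)$; the fact that $\lambda_i(t+1)=\lambda'_i(t+1)\xor\lambda''_i(t+1)$ equals $v(\lambda_i)$ for $i\leq s$ precisely when $\Omega$ flipped between times $t$ and $t+1$; the preservation of $\lambda_{s+1},\dots,\lambda_n$ as some $v'$ by the self-loops; and the two-step delay making $\psi(t+2)$ equal to the truth value of $\psi$ evaluated at $\lambda(t)$. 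In the three cases $(\0,\0)$, $(\0,\1)$, $(\1,\0)$ these identities jointly force $\lambda(0)_{\leq s}=v$ together with $\psi[v][v']\equiv\0$, contradicting the hypothesis on $v$. In the remaining case $(\1,\1)$ they force $\Omega(0)=\0$, all $\lambda_i(0)=\0$ for $i\leq s$, and $\lambda'_i(0)=\lambda''_i(0)=\0$ for every $i$; every component of $x^{(0)}$ then coincides with the corresponding one of $x^{(1)}$, so the orbit is actually a fixed point rather than a 2-cycle. Longer-period limit cycles are irrelevant since the problem only concerns length $2$.

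For the reverse direction, any block-sequential $W$ induces an assignment $v(W)$ of $\lambda_1,\dots,\lambda_s$, namely the value the XOR mechanism produces in a 2-cycle where $\Omega$ flips. If the $\exists\forall$-3-SAT instance is negative then in particular $v(W)$ fails, so some $v'$ satisfies $\psi[v(W)][v']\equiv\0$. Starting from the configuration with $\lambda_{s+1},\dots,\lambda_n=v'$, $\lambda_1,\dots,\lambda_s=v(W)$, $\psi=\0$, $\Omega$ flipping, and $\lambda'_i,\lambda''_i$ at their 2-cycle values, the self-loops keep $\lambda_{>s}=v'$, $\psi$ stays at $\0$ because the formula is falsified, $f_\Omega$ reduces to $\neg x_\Omega$, and the orbit lies in exactly the same 2-cycle on $\{\Omega\}\cup\{\lambda'_1,\dots,\lambda'_s\}\cup\{\lambda''_1,\dots,\lambda''_s\}$ as in the original even-$k$ proof. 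Together with Theorem~\ref{th:BS-no-k-limit-cycle-problem-in} this yields $\NPNP$-completeness. The main obstacle is the forward case analysis: once the clock can be stopped, the encoding of $v$ on $\lambda_{\leq s}$ is no longer automatically visible in attractors as it was in Theorem~\ref{th:BS-no-k-limit-cycle-problem-hard-even}, so one has to rule out 2-cycles indirectly, by showing that the cyclic equations either falsify the formula at $(v,v')$ for some $v'$ or collapse the would-be 2-cycle into a fixed point.
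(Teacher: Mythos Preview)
Your construction and argument are correct, and the overall strategy coincides with the paper's: drop the components $\psi_0,\dots,\psi_{k-1}$ and replace $f_\Omega$ by $\neg x_\Omega\wedge\neg x_\psi$, so that the clock itself furnishes the length-$2$ cycle and is halted once $x_\psi=\1$.

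There is one genuine difference. The paper additionally turns $\psi$ into a latch by setting $f_\psi(x)=(x_{C_1}\wedge\dots\wedge x_{C_m})\vee x_\psi$. With this change the forward direction becomes a one-line convergence argument: for the schedule $W$ encoding a satisfying $v$, every attractor eventually has $x_\psi=\1$, which then freezes $\Omega$, then $\lambda'_i,\lambda''_i,\lambda_i$, then $C_j$, so every attractor is a fixed point. You keep $f_\psi$ exactly as in Theorem~\ref{th:BS-no-k-limit-cycle-problem-hard-even} (no latch) and instead eliminate $2$-cycles by the four-case analysis on $(\psi(0),\psi(1))$; this is sound because in each case the $\Omega$-behaviour forces $\lambda_{\leq s}$ to be constant (equal to $v$ when $\Omega$ flips, and to $\0^s$ when it does not), hence $C_j$ and then $\psi$ are constant along the $2$-cycle, which either contradicts the case assumption or collapses the orbit to a fixed point. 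What the paper's latch buys is a shorter forward proof and the stronger conclusion that \emph{all} attractors are fixed points (used later in the Conclusion to transfer the result to $\phi_{\geq k}$); what your version buys is one fewer modification to the construction, at the cost of the case analysis you correctly flag as the main obstacle. The reverse direction is identical in both approaches.
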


\begin{proof}
  We present again a reduction from {\bf $\exists\forall$-3-SAT}, with a
  slightly modified construction from
  Theorem~\ref{th:BS-no-k-limit-cycle-problem-hard-even}.
  Given a 3-CNF formula $\psi$ on
  $\{\lambda_1,\dots,\lambda_n\}$ with $m$ clauses denoted as usual
  $C_1,\dots,C_m$, and an integer $s \in [n]$, we construct the following BN
  of size $2s+n+m+2$. The components are
  $$
    \{ \Omega,\psi \} \cup
    \{ \lambda_1,\dots,\lambda_n \} \cup
    \{ \lambda'_1,\dots,\lambda'_s \} \cup
    \{ \lambda''_1,\dots,\lambda''_s \} \cup
    \{ C_1,\dots,C_m \}
  $$
  and the local functions are
  \begin{itemize}
    \item $f_\Omega(x)=\neg x_\Omega \wedge \neg x_\psi$,
    \item $f_{\lambda'_i}(x)=f_{\lambda''_i}(x)=x_\Omega$ for $i \in [s]$,
    \item $f_{\lambda_i}(x)=x_{\lambda'_i} \xor x_{\lambda''_i}$ for $i \in [s]$,
      and $f_{\lambda_i}(x)=x_{\lambda_i}$ for $i \in [n] \setminus [s]$,
    \item $f_{C_j}(x)=\bigvee_{\lambda_i \in C_j} x_{\lambda_i} \vee \bigvee_{\neg\lambda_i \in C_j} \neg x_{\lambda_i}$ for $j \in [m]$,
    \item $f_{\psi}(x)= (x_{C_1} \wedge \dots \wedge x_{C_m}) \vee x_\psi$.
  \end{itemize}

  In this construction, the valuation of existential variables is still
  encoded in the block-sequential update schedule $W$, and all combinations
  of states on components corresponding to universal variables still appear
  in attractors.
  Now if the formula $\psi$ is a negative instance of
  {\bf $\exists\forall$-3-SAT}, then for any $W$ there exists a complete
  valuation (existential and universal variables) not satisfying the formula,
  hence in some attractor we have $x_\psi=\0$, and component $\Omega$
  flips at each step, giving a limit-cycle of length $2$.
  On the contrary, if $\psi$ is a positive instance of
  {\bf $\exists\forall$-3-SAT}, then there exists a $W$ such that
  all complete valuations satisfy the formula, hence in all attractors
  we have $x_\psi=\1$ (suppose $x_\psi=\0$, then it will converge
  to state $\1$ under update schedule $W$). Finaly, if $x_\psi=\1$
  then the attractor is a fixed point (it fixes component $\Omega$, then
  $\lambda''_i,\lambda'_i,\lambda_i$, then $C_j$), thus in this case there
  is no limit-cycle of length other than $1$.
\end{proof}

The idea presented in Corollary~\ref{coro:BS-no-k-limit-cycle-problem-hard-2} of
stopping a clock when the formula is satisfied (the clock gives a limit-cycle
of length $k$, and stopping it leads to a fixed point), can be extended to
any $k>2$. The challenge here is to design a clock giving a limit-cycle of length
$k$ for any block-sequential update schedule.

\begin{theorem}
  \label{th:BS-no-k-limit-cycle-problem-hard-k}
  {\bf BS no $k$-LC} is
  $\NPNP$-hard
  for any $k > 2$.
\end{theorem}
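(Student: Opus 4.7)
Since Theorem~\ref{th:BS-no-k-limit-cycle-problem-hard-even} already settles the case of even $k$, it suffices to cover odd $k>2$. The plan is to reduce from {\bf $\exists\forall$-3-SAT} following the blueprint of Theorem~\ref{th:BS-no-k-limit-cycle-problem-hard-even} and Corollary~\ref{coro:BS-no-k-limit-cycle-problem-hard-2}. Given a 3-CNF formula $\psi$ with $m$ clauses on $\{\lambda_1,\dots,\lambda_n\}$ and an integer $s\in[n]$, I keep the ``Boolean layer'' of those constructions unchanged: variable components $\lambda_1,\dots,\lambda_n$; the pairs $\lambda'_i,\lambda''_i$ for the existential variables ($i\in[s]$), whose relative position around a pivot component in the schedule $W$ encodes the $\exists$-assignment exactly as in Theorem~\ref{th:BS-no-k-limit-cycle-problem-hard-even}; self-loops $f_{\lambda_i}(x)=x_{\lambda_i}$ for the universal variables ($i>s$), so that every valuation appears in some attractor; clause OR-gates $C_1,\dots,C_m$; and the latch $f_\psi(x)=(x_{C_1}\wedge\dots\wedge x_{C_m})\vee x_\psi$ inherited from Corollary~\ref{coro:BS-no-k-limit-cycle-problem-hard-2}, so that $x_\psi=\1$ is forced in every attractor precisely when the $\exists$-assignment encoded by $W$ witnesses the formula.

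The new ingredient, as announced in the paragraph preceding the theorem, is a clock gadget on $O(k)$ additional components that must (i) produce a limit-cycle of length exactly $k$ under every block-sequential schedule while $x_\psi=\0$, and (ii) collapse to a fixed point as soon as $x_\psi=\1$. The obstacle is that the period-$2$ flip $\Omega$ used for even $k$ cannot be combined with a separate shift cycle of length $k$ when $k$ is odd, because the joint period would then be $\mathrm{lcm}(2,k)=2k\neq k$. I therefore replace $\Omega$ by a clock whose \emph{intrinsic} period is $k$: concretely a positive cycle $\omega_0,\dots,\omega_{k-1}$ augmented by auxiliary buffer components, wired so that, whatever ordered partition $W$ induces on the clock, one application of $f^{(W)}$ shifts a distinguished token exactly one position along the cycle. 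Each clock local function carries an overall $\neg x_\psi$ guard, so that $x_\psi=\1$ forces every clock component to $\0$ in one macro-step and turns the attractor into a fixed point.

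Given such a clock, the verification mirrors Theorem~\ref{th:BS-no-k-limit-cycle-problem-hard-even}: for a positive $\exists\forall$-3-SAT instance I pick $W$ encoding a witnessing $\exists$-assignment so that $x_\psi$ becomes $\1$ in every attractor, the clock then collapses, and every attractor is a fixed point, certifying that no limit-cycle of length $k$ exists; for a negative instance, every $W$ admits some completion of the universal variables giving an attractor with $x_\psi=\0$, inside which the clock runs and produces a length-$k$ limit-cycle. The step I expect to be hardest is the clock's combinatorial analysis: unlike the even case, where the parity of $\Omega$ splits the update schedule into two tidy half-phases, here the one-token-shift invariant must be checked for \emph{every} ordered partition of the clock components, a finite but delicate case analysis in the same spirit as the four-case argument that closes the proof of Theorem~\ref{th:BS-no-k-limit-cycle-problem-hard-even}. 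This robust-clock construction is the real technical content of the theorem; once it is in place, the rest of the reduction is essentially a re-run of the earlier proofs.
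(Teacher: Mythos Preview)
Your proposal identifies the right obstacle (the period-$2$ flip $\Omega$ is incompatible with odd $k$) and the right high-level remedy (replace it by a clock of intrinsic period $k$ that survives every block-sequential schedule and is killed by $x_\psi=\1$). However, the proof stops precisely where the real work begins: you state that the clock is ``concretely a positive cycle $\omega_0,\dots,\omega_{k-1}$ augmented by auxiliary buffer components'' so that every $W$ shifts a token by exactly one, but you neither give those local functions nor the case analysis you acknowledge to be ``the hardest step''. As written, this is a description of what a proof would need, not a proof. A secondary gap is that you keep the $\lambda'_i,\lambda''_i$ xor-encoding from Theorem~\ref{th:BS-no-k-limit-cycle-problem-hard-even} verbatim; that encoding relies on the pivot flipping at \emph{every} step, which your period-$k$ pivot no longer does, so the existential-variable mechanism must be reworked as well.

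The paper closes both gaps with one idea you did not anticipate: rather than designing a clock that is intrinsically robust to all schedules, it adds $\lceil\log_2(\BS{k+1})\rceil$ self-looped components $\omega$ that \emph{store a description of a schedule}, and it makes the $k{+}1$ clock components $\Omega_0,\dots,\Omega_k$ advance their token according to the order written on the $\omega$'s rather than according to the actual $W$. Since $k$ is a constant, this is still $O(1)$ extra components. For a negative $\exists\forall$-instance and any given $W$, the adversary simply sets the $\omega$'s (fixed by their self-loops) to encode the projection of $W$ onto the $\Omega$'s; then the ``intended'' order matches the actual one and the token shifts by one per macro-step, yielding a limit-cycle of length exactly $k$. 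For a positive instance, the paper's $W$ updates all $\Omega$'s in one block, and the construction is tuned so that whatever the $\omega$'s say, the orbit eventually reaches $x_\psi=\1$ and a dedicated \texttt{stop} component freezes the clock into a fixed point. The existential encoding is also redone: only $\lambda'_i$ (no $\lambda''_i$) is kept, and $f_{\lambda_i}$ reads $x_{\Omega_0}\xor x_{\lambda'_i}$ conditioned on $x_{\Omega_0}=\1$, which fires exactly once per clock period. In short, the missing insight is to encode the schedule inside the configuration instead of trying to make the gadget schedule-oblivious.
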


\begin{proof}
  The reduction is again from {\bf $\exists\forall$-3-SAT}.
  Given a 3-CNF formula $\psi$ on
  $\{\lambda_1,\dots,\lambda_n\}$ with $m$ clauses denoted as usual
  $C_1,\dots,C_m$, and an integer $s \in [n]$, we construct a BN
  of size $s+n+m+k+\lceil \log_2(\BS{k+1}) \rceil+3$
  with $\BS{k}$ the number of block-sequential update
  schedules\footnote{
    The number of block-sequential update schedules of size $k$
    equals the number of ordered partitions of a set of $k$ elements, also
    known as {\em ordered Bell number} (sequence \texttt{A000670} in the
    \texttt{OEIS} \cite{oeisA000670}). We have
    $$
      \BS{n}=
      \sum_{i=0}^k i! \left\{\begin{matrix} k \\ i \end{matrix}\right\}=
      \sum_{i=0}^k \sum_{j=0}^i (-1)^{i-j} \binom{i}{j}j^k
    $$
    using the Stirling numbers of the second kind (denoted with $\{\}$) counting the number of
    surjective maps from a set of $i$ elements to a set of $k$ 
    elements~\cite{R-Noual2011}.
  }
  of size $k$,
  on the components
  $$
    \begin{array}{c}
      \{ \Omega_0,\dots,\Omega_k \} \cup
      \{ \omega_1,\dots,\omega_{\lceil \log_2(\BS{k+1}) \rceil} \}
      \cup \{ \texttt{stop} \}\\[.2em]
      \cup~
      \{ \lambda_1,\dots,\lambda_n \} \cup
      \{ \lambda'_1,\dots,\lambda'_s \} \cup
      \{ C_1,\dots,C_m \} \cup
      \{ \psi \}.
    \end{array}
  $$
  Recall that $k$ is fixed in the problem definition, hence we do not need\footnote{
    \TODO{Can we nevertheless consider it, just for fun, in this footnote?}
  } to
  consider the growth of $\log_2(\BS{k+1})$, which is a constant from the point
  of view of {\bf BS no $k$-LC}.

  \medskip

  The idea is to build a clock of length $k$ on the $k+1$ components $\Omega$,
  with some $\1$ state moving forward at each step. However, it will not move
  forward from components $\Omega_i$
  to $\Omega_{i+1}$, {\em etc}
  modulo $k$, but instead it will move forward according to the order of
  components $\Omega$ in the current update schedule, which is supposed to
  be encoded (in binary\footnote{
    Since $k$ is a constant we can consider any computable encoding of the
    block-sequential update schedules,
    for example their numbering according to the lexicographic order
    (each subset of $\{\Omega_0,\dots,\Omega_k\}$ corresponds to a digit
    on $k$ bits).
  }) on components $\omega$ (positive loops on components $\omega$ will
  let them take any fixed value in attractors).
  Similarly to the construction
  of Theorem~\ref{th:BS-no-k-limit-cycle-problem-hard-even}, the update order
  of $\lambda_i,\lambda'_i$
  compared to clock component $\Omega_0$ encodes existential
  variables in $W$, and positive loops on universal variables let them take any
  fixed value in attractors. Finally, $x_\psi=\1$ will stop the clock. Regarding
  the logics of the proof, if $\psi$ is a positive instance then one can choose
  $W$ with components $\Omega$ updated in parallel
  and $\lambda'_i$ encoding the existential variables to satisfy $\psi$,
  then component $\psi$ will be in state $\1$ in any
  attractor (thanks to the construction, regardless of the update
  schedule encoded on components $\omega$)
  hence leading to fixed points only. If
  $\psi$ is a negative instance, then for any $W$ we can set components $\omega$
  accordingly to have a working clock of length $k$,
  and no matter the encoding of existential
  variables there exists a choice of states on components corresponding to
  universal variables such that $\psi$ is in state $\0$, letting the clock
  tick forever and create a limit-cycle of length $k$.

  \medskip

  The local functions are
  \begin{itemize}
    \item $f_\texttt{stop}(x)= x_\texttt{stop} \vee x_\psi \vee
      \texttt{error}(x_{\omega_1},\dots,x_{\omega_{\lceil \log_2(\BS{k+1}) \rceil}})$,
      where $\texttt{error}(\omega)$ equals $\1$ when components $\omega$
      do not encode a block-sequential update schedule,
    \item $f_{\omega_i}(x)= x_{\omega_i}$ for $i \in [\lceil \log_2(\BS{k+1}) \rceil]$,
    \item for the definition of $\Omega_i$, let us consider the update schedule
      encoded on components $\omega$ in some configuration $x$,
      and denote $j_0(x),\dots,j_k(x)$ the lexicographically
      minimal permutation of $0,\dots,k$ such that
      $\Omega_{j_0(x)} \preccurlyeq_{x_\omega} \Omega_{j_1(x)} \preccurlyeq_{x_\omega} \dots \preccurlyeq_{x_\omega} \Omega_{j_k(x)}$,
      where $a \preccurlyeq_{x_\omega} b$ means that component $a$ is updated prior to or
      simultaneously with component $b$ in the update schedule encoded
      on components $\omega$ in configuration $x$;\\
      for $i \in \{0,\dots,k\}$,
      $f_{\Omega_i}(x)= \neg x_\texttt{stop} \wedge
      \left\{\begin{array}{ll}
        \1 & \text{ if } i=j_p(x) \text{ and } x_{\Omega_{j_{p-1}(x)}}=\1 \text{ and }\\
          &\qquad \Big( x_\omega \neq (\{\Omega_0,\dots,\Omega_k\}) \text{ or } i \neq k \Big)\\
        \0 & \text{ otherwise,}
      \end{array}\right.$
      with $x_\omega$ the block-sequential update schedule encoded on components $\omega$,
    \item $f_{\lambda'_i}(x)=x_{\Omega_0}$ for $i \in [s]$,
    \item for $i \in [s]$,
      $f_{\lambda_i}(x)=
      \left\{\begin{array}{ll}
        x_{\Omega_0} \xor x_{\lambda'_i} & \text{ if } x_{\Omega_0}=\1\\
        x_{\lambda_i} & \text{ otherwise,}
      \end{array}\right.$
      \\
      and for $i \in [n] \setminus [s]$, $f_{\lambda_i}(x)=x_{\lambda_i}$,
    \item $f_{C_j}(x)=\bigvee_{\lambda_i \in C_j} x_{\lambda_i} \vee \bigvee_{\neg\lambda_i \in C_j} \neg x_{\lambda_i}$ for $j \in [m]$,
    \item $f_{\psi}(x)= (x_{C_1} \wedge \dots \wedge x_{C_m})$.
  \end{itemize}
  An example signed interaction digraph of this BN is presented on
  Figure~\ref{fig:BS-no-k-limit-cycle-k}.
  \begin{figure}
    \centerline{\includegraphics{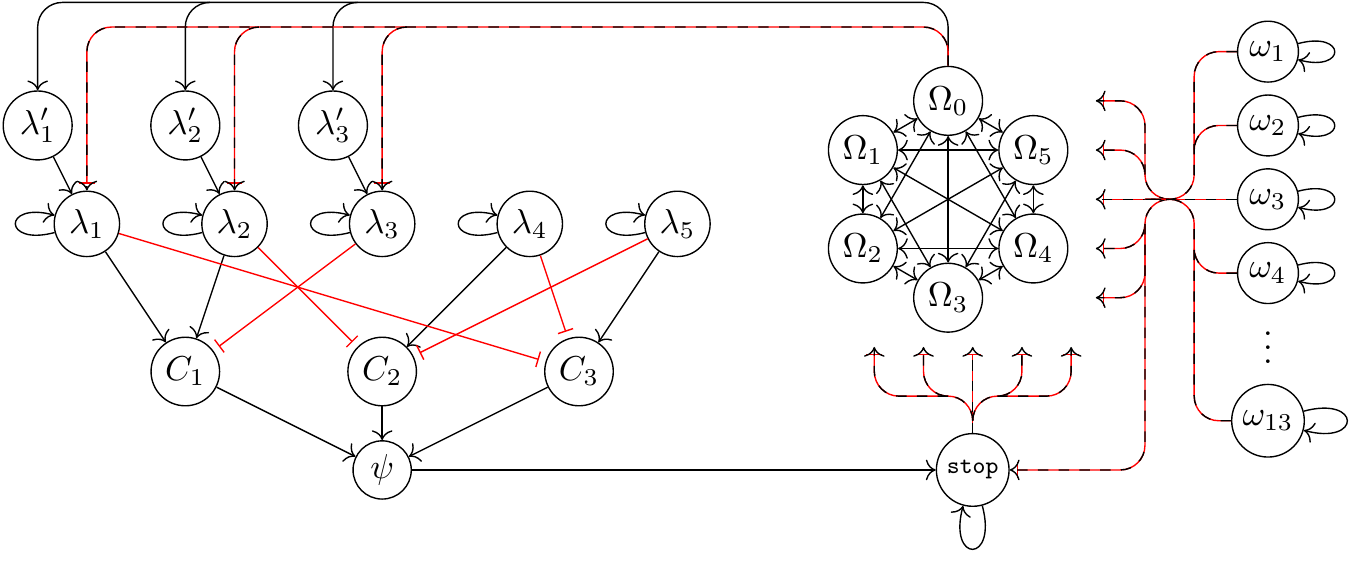}}
    \caption{Signed interaction digraph of the BN obtained for $k=5$ ($\BS{6}=4683$), the
    3-CNF formula $\psi=(\lambda_1 \vee \lambda_2 \vee \neg \lambda_3) \wedge
    (\neg \lambda_2 \vee \lambda_4 \vee \neg \lambda_5) \wedge (\neg \lambda_1
    \vee \neg \lambda_4 \vee \lambda_5)$, and $s=3$. Negative arcs ($-$) are
    red with a flat head, positive arcs ($+$) are black, positive-negative arcs
    ($\pm$) are dashed with both colors and heads. All components $\texttt{stop}$ and
    $\omega_1,\dots,\omega_{13}$ are connected to all components $\Omega_0,\dots,\Omega_5$
    with arcs of sign $\pm$.}
    \label{fig:BS-no-k-limit-cycle-k}
  \end{figure}
  First remark that if $x_\texttt{stop}=\1$ then $x$ converges to a fixed point
  (the clock stops), hence we will consider thereafter only attractors from
  configurations with component $\texttt{stop}$ in state $\0$.

  \medskip

  Suppose $\psi$ is a negative instance of {\bf $\exists\forall$-3-SAT}. For
  any block-sequential update schedule $W$, consider a configuration $x$ such
  that components $\omega$ encode the projection of $W$ on the clock components.
  If $W$ is not the parallel update
  schedule, the clock has the following dynamics (time goes downward,
  one step per line):
  $$
    \begin{array}{|c|c|c|c|c|c|c|}
      \hline
      \Omega_{j_k(x)} & \Omega_{j_{k-1}(x)} & \Omega_{j_{k-2}(x)} & \dots & \Omega_{j_2(x)} & \Omega_{j_1(x)} & \Omega_{j_0(x)}\\
      \hline
      \0 & \1 & \0 & \dots & \0 & \0 & \0\\
      \hline
      \0 & \0 & \1 & \dots & \0 & \0 & \0\\
      \hline
      \vdots & \vdots & \vdots & & \vdots & \vdots & \vdots\\
      \hline
      \0 & \0 & \0 & \dots & \1 & \0 & \0\\
      \hline
      \0 & \0 & \0 & \dots & \0 & \1 & \0\\
      \hline
      \1 & \0 & \0 & \dots & \0 & \0 & \1\\
      \hline
      \0 & \1 & \0 & \dots & \0 & \0 & \0\\
      \hline
    \end{array}
  $$
  and if $W$ is the parallel update schedule $(\{\Omega_0,\dots,\Omega_k\})$,
  the clock has the following dynamics (time goes downward,
  one step per line):
  $$
    \begin{array}{|c|c|c|c|c|c|c|}
      \hline
      \Omega_{0} & \Omega_{1} & \Omega_{2} & \dots & \Omega_{k-2} & \Omega_{k-1} & \Omega_{k}\\
      \hline
      \0 & \1 & \0 & \dots & \0 & \0 & \0\\
      \hline
      \0 & \0 & \1 & \dots & \0 & \0 & \0\\
      \hline
      \vdots & \vdots & \vdots & & \vdots & \vdots & \vdots\\
      \hline
      \0 & \0 & \0 & \dots & \1 & \0 & \0\\
      \hline
      \0 & \0 & \0 & \dots & \0 & \1 & \0\\
      \hline
      \1 & \0 & \0 & \dots & \0 & \0 & \0\\
      \hline
      \0 & \1 & \0 & \dots & \0 & \0 & \0\\
      \hline
    \end{array}
  $$
  thus we have a clock of length $k$ in any case:
  \begin{itemize}
    \item when $W$ is not parallel the minimum component according
      to $\preccurlyeq_{x_\omega}$ and the lexicographical order is
      skipped (the $\1$ state moves two components forward),
    \item when $W$ is parallel component $\Omega_k$ is discarded
      (it remains in state $\0$ and the clock ticks on components
      $\Omega_0,\dots,\Omega_{k-1}$).
  \end{itemize}
  Furthermore, for $i \in [s]$ component $\lambda'_i$
  goes to state $\1$ exactly once every $k$ steps, and the relative
  positions of components $\lambda'_i,\lambda_i,\Omega_0$ fixes the value of
  component $\lambda_i$:
  \begin{itemize}
    \item if ($\lambda'_i =_W \Omega_0$)
      or ($\Omega_0 \prec_W \lambda_i \preccurlyeq_W \lambda'_i$)
      or ($\lambda_i \preccurlyeq_W \lambda'_i \prec_W \Omega_0$)
      or ($\lambda'_i \prec_W \Omega_0 \prec_W \lambda_i$)
      then $x_{\lambda_i}=\1$,
    \item otherwise $x_{\lambda_i}=\0$.
  \end{itemize}
  Since the instance $\psi$ is negative, for any assignment of states to
  components $\lambda_1,\dots,\lambda_s$ (corresponding to existential
  variables), we can set the states of components
  $\lambda_{s+1},\dots,\lambda_n$ (corresponding to universal variables) so
  that at least one clause $C_j$ is not satisfied hence $x_{C_j}=\0$ and
  $x_\psi=0$. As a consequence $f_\texttt{stop}(x)=\0$, {\em i.e.} the clock is not
  stopped, and therefore it creates a limit-cycle of length $k$.

  \medskip

  Suppose $\psi$ is a positive instance of {\bf $\exists\forall$-3-SAT}, with
  $v:\{\lambda_1,\dots,\lambda_s\} \to \bool$ an assignment such that for all
  $v':\{\lambda_{s+1},\dots,\lambda_n\} \to \bool$ we have $\psi[v][v'] \equiv \1$.
  We define
  $$
    W=(T',\{\Omega_0,\dots,\Omega_k\},\mathcal{R}),
  $$
  with $T'=\{\lambda'_i \mid v(\lambda_i)=\1\}$ and $\mathcal{R}$ all the other
  components. We consider a case disjunction on the starting configuration.
  \begin{itemize}
    \item If components $\omega$ encode the parallel update schedule, then from
      what preceeds the states of components $\lambda_i$ for $i \in [s]$ encode
      $v$ and component $\psi$ will eventually be in state $\1$, so
      does component $\texttt{stop}$ and the clock stops, leading to a fixed
      point.
    \item If components $\omega$ do not encode the parallel update schedule, then
      from the definition of local function $f_{\Omega_i}$ we will have a clock
      of length $k+1$ (with state $\1$ moving one component forward at each
      step, in an order given by the update schedule encoded on components $\omega$).
      However, it
      does not alter the fact that the states of components $\lambda_i$ for $i
      \in [s]$ encode $v$, therefore the same deductions apply: the
      configuration converges to a fixed point.
  \end{itemize}
  We can conclude that under update schedule $W$, any configuration converges
  to a fixed point hence there is no limit-cycle of length $k$.
\end{proof}

\begin{remark}
  \label{remark:truthtable}
  Encoding local function as truth tables of the
  components it effectively depends on (its in-neighbors in the interaction
  digraph) would also lead to the same complexity results,
  because all the 
  constructions presented for hardness results can be adapted so that
  each component depends on
  a bounded number of components (the resulting interaction digraph has
  a bounded in-degree), given that $k$ is a constant.
\end{remark}

\section{Conclusion}

We have characterized precisely the computational complexity of problems
related to, given a BN, the existence or not of limit-cycles of some fixed
length $k$, with the quantifier alternation of ``does there exist an update
schedule such that all configurations are not in a limit-cycle of size $k$''
bringing us to level $\Sigma^\Poly_2$ of the polynomial hierarchy.

Remark that all the constructions presented in our reductions (except for
Theorem~\ref{theorem:3} which is subsumed by Theorem~\ref{theorem:4}) are such
that the resulting BN has either some limit cycles of size $k$, or only fixed
points.
Consequently, the same results directly hold for the problem $\phi_k(f^{(W)})$
is replaced by $\phi_{\geq k}(f^{(W)})=\sum_{\ell \geq k} \phi_\ell(f^{(W)})$,
{\em i.e.} we consider limit-cycles of length at least $k$ instead of exactly
$k$. With little additional work the proofs may also be adapted to $\phi_{\leq
k}(f^{(W)})=\sum_{\ell \leq k} \phi_\ell(f^{(W)})$, {\em i.e.} if we consider
limit-cycles of length at most $k$ (fixed points should be transformed into
limit-cycles of length larger that $k$).

Finally, if $k$ is part of the input, is there a drastic complexity increase as
observed for problems related to the number of fixed points
in~\cite{C-Bridoux2019}? The construction presented in the proof of
Theorem~\ref{th:BS-no-k-limit-cycle-problem-hard-k} makes heavy use of being a
$k$ constant.

We hope that these first results on the complexity of deciding the existence of
limit-cycles in Boolean networks opens a promising research direction,
confronting the necessary difficulty of considering a diversity of update
schedules. The lens of computational complexity reveals, via the gadgets
employed in lower bound constructions, mechanisms at the heart of Boolean
network's dynamical richness.

\section{Acknowledgments}

The authors are thankful to
project ANR-18-CE40-0002-01 ``FANs'',
project ECOS-CONICYT C16E01,
project STIC AmSud CoDANet 19-STIC-03 (Campus France 43478PD),
for their funding.

\bibliographystyle{plain}
\bibliography{biblio}

\end{document}